\documentclass[10pt]{IEEEtran}
\usepackage{calc}

\usepackage[normalem]{ulem}
\usepackage{url}
\usepackage{hyperref}
\hypersetup{colorlinks,urlcolor=blue}

\hypersetup{colorlinks=false,pdfborder=000}

\makeatletter
\DeclareUrlCommand\ULurl@@{%
  \def\UrlLeft{\uline\bgroup}%
  \def\UrlRight{\egroup}}
\def\ULurl@#1{\hyper@linkurl{\ULurl@@{#1}}{#1}}
\DeclareRobustCommand*\ULurl{\hyper@normalise\ULurl@}

\usepackage{xcolor,multirow,gensymb}
  \usepackage{cite}
\usepackage{graphicx,subfigure,epsfig}
\usepackage{psfrag}
\usepackage{amsmath,amssymb}
\usepackage{color}
\usepackage{array}
\usepackage{nicefrac}
\interdisplaylinepenalty=2500

\usepackage{amsmath,amssymb}
\usepackage{xifthen}
\usepackage{mathtools}
\usepackage{enumerate}
\usepackage{microtype}
\usepackage{xspace}
\usepackage{bm}
\usepackage[T1]{fontenc}
\usepackage{fancyhdr}
\usepackage{lastpage}
\usepackage{rotating}
\usepackage{tabulary}


\renewcommand{\deg}{\degree}

\newcommand{\mbs}[1]{\bm{#1}}
\newcommand{\mat}[1]{{\uppercase{\mbs{#1}}}}
\newcommand{\Id}{\mat{\mathrm{I}}}

\newcommand{\T}{{\scriptscriptstyle\mathsf{T}}}
\renewcommand{\H}{{\scriptscriptstyle\mathsf{H}}}

\renewcommand{\Re}[1][]{\ifthenelse{\isempty{#1}}{\operatorname{Re}}{\operatorname{Re}\left(#1\right)}}
\renewcommand{\Im}[1][]{\ifthenelse{\isempty{#1}}{\operatorname{Im}}{\operatorname{Im}\left(#1\right)}}

\def\var{{\mathrm{var}}}


\def\bA{{\mathbf{A}}}

\def\bD{{\mathbf{D}}}
\def\bE{{\mathbf{E}}}
\def\bF{{\mathbf{F}}}
\def\bG{{\mathbf{G}}}
\def\bH{{\mathbf{H}}}
\def\bI{{\mathbf{I}}}

\def\bR{{\mathbf{R}}}

\def\bY{{\mathbf{Y}}}
\def\bZ{{\mathbf{Z}}}


\newcommand{\cC}{{\cal C}}

\newcommand{\cN}{{\cal N}}


\def\rp{{\mathrm{p}}}


\def\bee{{\mathbf{e}}}
\def\bff{{\mathbf{f}}}
\def\bg{{\mathbf{g}}}
\def\bh{{\mathbf{h}}}

\def\bs{{\mathbf{s}}}

\def\bw{{\mathbf{w}}}
\def\bx{{\mathbf{x}}}
\def\by{{\mathbf{y}}}
\def\bz{{\mathbf{z}}}
\def\b0{{\mathbf{0}}}

\def\bTheta{{\boldsymbol{\Theta}}}

\def\bbC{{\mathbb{C}}}

\def\bpsi{{\boldsymbol{\psi}}}
\def\bPsi{{\boldsymbol{\Psi}}}

\newcommand{\EE}{\mathbb{E}}

\newcommand{\CN}[1][]{\ifthenelse{\isempty{#1}}{\mathcal{N}_{\mathbb{C}}}{\mathcal{N}_{\mathbb{C}}\left(#1\right)}}

\renewcommand{\P}[1][]{\ifthenelse{\isempty{#1}}{\mathbb{P}}{\mathbb{P}\left(#1\right)}}
\newcommand{\E}[1][]{\ifthenelse{\isempty{#1}}{\mathbb{E}}{\mathbb{E}\left(#1\right)}}
\renewcommand{\det}[1][]{\ifthenelse{\isempty{#1}}{\text{det}}{\text{det}\left(#1\right)}}
\newcommand{\trace}[1][]{\ifthenelse{\isempty{#1}}{\text{tr}}{\text{tr}\left(#1\right)}}
\newcommand{\rank}[1][]{\ifthenelse{\isempty{#1}}{\text{rank}}{\text{rank}\left(#1\right)}}
\newcommand{\diag}[1][]{\ifthenelse{\isempty{#1}}{\text{diag}}{\text{diag}\left(#1\right)}}
\def\nn{\nonumber}


\newtheorem{proposition}{Proposition}
\newtheorem{remark}{Remark}
\newtheorem{corollary}{Corollary}

\newcommand{\hatvg}{\hat{\bg}} 
 
\def\bPhi{{\boldsymbol{\Phi}}}
  
\newcommand{\tr}{\mathop{\mathrm{tr}}\nolimits}
\newtheorem{Theorem}{Theorem}

\newcommand{\al}{\alpha}


\newcounter{enumi_saved}
\setcounter{enumi_saved}{0}

\usepackage{answers}
\Newassociation{solution}{Solution}{solutionfile}

\AtBeginDocument{\Opensolutionfile{solutionfile}[\jobname]}
\AtEndDocument{\Closesolutionfile{solutionfile}\clearpage
}


\begin{document}

\title{Downlink Performance of Massive MIMO under General Channel Aging Conditions}
\author{Anastasios K. Papazafeiropoulos \\
Communications and Signal Processing Group, Imperial College London, London, U.K.\\
Email: a.papazafeiropoulos@imperial.ac.uk
\thanks{This research was supported by a Marie Curie Intra European Fellowship within the $7$th European Community Framework Programme for Research of the European Commission under grant agreement no. [$330806$], IAWICOM.}}
\maketitle
\begin{abstract}
Massive multiple-input multiple-output (MIMO) is a promising technology aiming at achieving high spectral efficiency by deploying a large number of base station (BS) antennas  using coherent combining. Channel aging due to user mobility is a significant degrading factor of such systems. In addition, cost efficiency of massive MIMO is a prerecuisite for their deployment, that leads to low cost antenna elements inducing high phase noise. Since phase is  time-dependent, it contributes to channel aging. For this reason, we present a novel joint channel-phase noise model, that enables us to study the downlink of massive MIMO with maximum ratio transmission (MRT) precoder under these conditions by means of the deterministic equivalent of the achievable sum-rate. Among the noteworthy outcomes is that the degradation due to user mobility dominates over the effect of phase noise. Nevertheless, we demonstrate that  the  joint effects of phase noise and user mobility do not degrade the power scaling law $1/\sqrt{M}$ ($M$ is the number of BS antennas), as has been established in massive MIMO systems with imperfect channel state information.
\end{abstract}
\begin{keywords}
Channel estimation, Doppler shift, phase noise, massive MIMO, linear precoding.
\end{keywords}
\section{Introduction}
The large multiple-input multiple-output (MIMO) paradigm, known also as massive MIMO~\cite{Rusek}, is a promising technology for achieving higher throughput per area, which is necessary for next generation systems~\cite{Larsson}. In particular, this concept employs multi-user MIMO technniques, where the base station (BS) includes a large antenna array that results to the vanishment of both inter-user interference and thermal noise  as the number of antennas increases due to channel orthogonality.

Originally, most studies assume perfect  channel state information (CSI); however, this is a strong assumption, being untenable in practice. Given that the performance depends on the quality of CSI at the transmitter, it is of paramount importance to investigate realistic scenarios with imperfect CSI, acquired during the uplink training phase in time-division-duplex (TDD) architectures~\cite{Rusek,Hoydis}.

One of the main sources contributing to imperfect knowledge of  CSI is channel aging, present in time-varying channels, that describes the mismatch between the estimated and current channels due to the relative movement between users and BS antennas. Despite its crucial role, it has not been studied in depth. Specifically, the uplink and downlink signal-to-interference-and-noise-ratios (SINRs) have been derived  by using a deterministic equivalent analysis after applying a maximum ratio combining (MRC) decoder and  a maximum ratio transmission (MRT) precoder, respectively~\cite{Truong}. Furter investigation and comparison took place in~\cite{PapazafeiropoulosWCNC,PapazafeiropoulosICASSP}, where more sophisticated linear techniques were applied. Moreover, the optimal linear receiver for cellular systems has been derived in ~\cite{PapazafeiropoulosICC} by exploiting the correlation between the channel estimates and the interference from other cells, and in~\cite{PapazafeiropoulosPIMRC}, the uplink analysis of a cellular network with zero-forcing (ZF) receivers for any finite as well as infinite number of BS antennas has been provided. Neverteless, tight lower bounds and the power scaling law have been studied for massive MIMO with channel aging in~\cite{PapazafeiropoulosISIT}.

In practice, the application of massive MIMO is meaningful, if both hardware cost and power consumption are kept low. In such case, hardware impairments become a significant degrading factor leading to imperfect CSI. Among the hardware imperfections, phase noise~\cite{Demir}, randomly phase shifting the output of the local oscillator (LO), accumulates over time. Unfortunately, it has been scarcely studied in the case of massive MIMO systems, and mostly for uplink scenarios. Specifically, the phase noise has been taken into account for single carrier  uplink  massive MIMO by considering time-reversal MRC and ZF in~\cite{Pitarokoilis1} and~\cite{Pitarokoilis2}, respectively. Furthermore, in~\cite{Bjornson1}, the achievable uplink user rate with MRC has been obtained by considering a general model with additive and multiplicative (phase noise) distortions, and it was shown that the phase noise can be alleviated in the large antenna limit. Especially, regarding the downlink,  an effort to address the effect of phase noise, being of great importance in large MIMO systems,  has taken place only in~\cite{KrishnanMassive}. 

Given that phase noise, being time dependent, contributes to channel aging, this work provides a novel model incorporating  both effects of user mobility and phase noise. In fact, we present an insightful joint channel-phase noise model. In addition, we provide the deterministic equivalent of the downlink  SINR with MRT. Furthermore, the power scaling law has been studied in the case of user mobility combined with phase (general channel aging conditions).  Numerical results reveal the impact of both effects in massive MIMO for various mobility conditions and topologies of LOs. Among the interesting observations, we mention that the effect of user mobility is more severe than phase noise, and the massive MIMO technique keeps being favorable even in general time-varying conditions.
\section{System Model}
We consider a single-cell system with a BS serving  $K$  single-antenna non-cooperative users. In particular, the BS is deployed with an array of $M$ antennas. Our study focuses on large scale networks, where both the number of antennas $M$ and users $K$ grow infinitely large, while keeping a finite ratio $\beta$, i.e., $M, K\to \infty$ with $K/M=\beta$ fixed such that $\mathrm{lim\,sup}_{M} K/M < \infty$.

A realistic quasi-static fading model is taken into account, where the $M\times 1$ channel response vector between the BS and user $k$ during the $n$th symbol, is denoted by $\bh_{k,n}  \in \bbC^{M}$. Especially, $\bh_{k,n}  \in \bbC^{M}$ exhibits flat fading, and it is assumed constant during the current symbol,  but it may vary slowly from symbol to symbol.  Moreover,  common effects such as path loss dependent on the distance of the users from the BS, and the spatial correlation due to limited antenna spacing, are considered by expressing the channel between  as
\begin{align}
 \bh_{k,n} = \bR^{1/2}_{k}\bw_{k,n},
\end{align}
where $\bR_{k} =\mathbb{E}\left[ \bh_{k,n} \bh_{k,n}^{\H}\right]\in \bbC^{M \times M}$ is a deterministic Hermitian-symmetric positive definite matrix representing the aforementioned effects.  The independence of $\bR_{k}$ by $n$ is a reasonable assumption, since effects such as spatial correlation and lognormal shadowing vary with time in a much slower pace than the coherence time~\cite{Truong,PapazafeiropoulosWCNC,PapazafeiropoulosICC}. In addition, $\bw_{k,n} \in \bbC^{M}$ is an uncorrelated fast fading Gaussian channel vector drawn as  a realization from zero-mean circularly symmetric complex Gaussian distribution, i.e. $\bw_{k,n} \sim \cC\cN(\b0,\bI_{M})$. 

\subsection{Phase Noise Model}
In practice,  phase noise, induced during the up-conversion of the baseband signal to bandpass and vise-versa, impairs both  the transmitter and the receiver, and it appears as a multiplicative factor to the channel vector. The cause is a distortion in phase due to the random phase drift of the generated carrier by the corresponding LO. 

Our analysis regarding the $n$th time slot,  takes into account for both synchronous and non-synchronous operation at the receiver. The latter setting assumes independent phase noise processes $\phi_{m,n}, m=1,\ldots,M$ with $\phi_{m,n}$ being the phase noise process at the $m$th antenna. Note that the  phase noise processes are considered as mutually independent, since each antenna has its own oscillator, i.e., separate independent LOs (SLOs) at each antenna.  This setting can degenerate to identical phase noise processes, if the LOs are assumed identical (ILOs). In the synchronous scenario, all the BS antennas are connected with a common LO (CLO), i.e., there is only one phase noise process $\phi_{n}$. Furthermore, we denote $\varphi_{k,n}$, $k=1,\ldots, K$ the phase noise process at the single-antenna user $k$. 

Phase noise during the $n$th symbol can be described by a discrete-time  independent Wiener process , i.e., the phase noises at the  LOs of the $m$th antenna of the BS and $k$th user are modeled as 
\begin{align}
 \phi_{m,n}=\phi_{m,n-1}+\delta^{\phi_{m}}_{n}\label{phaseNoiseBS}~~\mathrm{and}~~
 \varphi_{k,n}=\varphi_{k,n-1}+\delta^{\varphi_{k}}_{n},
\end{align}
where $\delta^{\phi_{m}}_{n}\sim \cN(0,\sigma_{\phi_{m}}^{2}) $ and $\delta^{\varphi_{k}}_{n}\sim \cN(0,\sigma_{\varphi_{k}}^{2})$. Note that $\sigma_{i}^{2}=4\pi^{2}f_{\mathrm{c}} c_{i}T_{\mathrm{s}}$, $i=\phi_{m}, \varphi_{k}$ describes the phase noise increment variance with $T_{\mathrm{s}}$, $c_{{i}}$, and $f_{\mathrm{c}}$ being the  symbol interval, a constant dependent on the oscillator, and the carrier frequency, respectively. 

\subsection{Channel Estimation}
In real systems obeying to the TDD architecture, the BS  estimates the uplink channel~\cite{Ngo1,Hoydis,PapazafeiropoulosWCNC}.  The estimation  occurs by means of  an uplink training phase during each transmission block,  where the transmission of pilot symbols takes place. The knowledge of the downlink channel is considered known due to the property of reciprocity.

By assuming that the channel estimation takes place at time $0$, we now derive the joint channel-phase noise linear minimum mean-square error detector  (LMMSE) estimate of the effective channel $\bg_{k,0}=\bTheta_{k,0}\bh_{k,0}$ under phase noise, small-scale fading, and channel impairments such as path loss and spatial correlation\footnote{Since the duration of the training phase is small, the phase noise can be considered constant because its innovation is  negligible.}. Note that $\bTheta_{k,0}=\mathrm{diag}\left\{ e^{j \theta_{k,0}^{(1)}}, \ldots, e^{j \theta_{k,0}^{(M)}} \right\}$ is the phase noise because of the BS and user $k$ LOs at time $0$ with $ \theta_{k,0}^{(i)}=\phi_{i,0}+\varphi_{k,0},~i=1,\ldots, M$. 
\begin{proposition}
 The LMMSE estimator of $\bg_{k,0}$, obtained during the training phase, is
 \begin{align}\label{estimatedChannel}
 \hat{\bg}_{k,0}=\left(  \Id_{M}+\frac{\sigma_{b}^{2}}{p_{\rp}}\bR_{k}^{-1}\right)^{-1}\tilde{\by}_{,k,0}^{\rp},
\end{align}
where $\tilde{\by}_{,k,0}^{\rp}$ is a noisy observation of the effective channel from user $k$ to the BS, and $p_{\mathrm{p}}=\tau p_{\mathrm{u}}$  with $p_{\mathrm{u}}$ and $\tau$ being the power per user in the uplink data transmission phase and the duration of the  training sequence during the training phase, respectively.
\end{proposition}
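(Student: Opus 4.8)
The plan is to derive \eqref{estimatedChannel} by applying the orthogonality principle for linear MMSE estimation to the de-spread pilot observation, so the first step is to make that observation explicit. During the training phase each user $j$ is assigned a length-$\tau$ pilot sequence $\bpsi_j\in\bbC^{\tau}$, and, under the standard assumption that these are mutually orthogonal with $\bpsi_i^{\H}\bpsi_j=\tau\,\delta_{ij}$, the $M\times\tau$ block collected at the BS is $\bY^{\rp}=\sqrt{p_{\ru}}\sum_{j=1}^{K}\bg_{j,0}\bpsi_j^{\T}+\bN^{\rp}$, where $\bN^{\rp}$ has i.i.d.\ $\cC\cN(0,\sigma_{b}^{2})$ entries and, by the footnote, $\bTheta_{j,0}$ is treated as unchanged over the short training window so that $\bg_{j,0}=\bTheta_{j,0}\bh_{j,0}$ is well defined. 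Correlating with $\bpsi_k^{*}$ and rescaling so that the coefficient of $\bg_{k,0}$ is unity (the factor being $1/(\tau\sqrt{p_{\ru}})=1/\sqrt{\tau p_{\rp}}$, using $p_{\rp}=\tau p_{\ru}$) removes the other users' contributions by orthogonality and yields $\tilde{\by}_{,k,0}^{\rp}=\bg_{k,0}+\tilde{\bn}_{k}$, with $\tilde{\bn}_{k}$ zero-mean, independent of $\bg_{k,0}$, and of covariance $(\sigma_{b}^{2}/p_{\rp})\Id_{M}$.

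The second step is the LMMSE algebra. Since $\bg_{k,0}$ need not be Gaussian, one seeks the best \emph{linear} estimate $\hatvg_{k,0}=\bC_{\bg\tilde{\by}}\,\bC_{\tilde{\by}}^{-1}\,\tilde{\by}_{,k,0}^{\rp}$, which requires only the second-order statistics $\bC_{\bg\tilde{\by}}\defeq\EE[\bg_{k,0}(\tilde{\by}_{,k,0}^{\rp})^{\H}]$ and $\bC_{\tilde{\by}}\defeq\EE[\tilde{\by}_{,k,0}^{\rp}(\tilde{\by}_{,k,0}^{\rp})^{\H}]$. Because $\tilde{\bn}_{k}$ is zero-mean and independent of $\bg_{k,0}$, these collapse to $\bC_{\bg\tilde{\by}}=\bC_{k}$ and $\bC_{\tilde{\by}}=\bC_{k}+(\sigma_{b}^{2}/p_{\rp})\Id_{M}$, where $\bC_{k}\defeq\EE[\bg_{k,0}\bg_{k,0}^{\H}]$, so that $\hatvg_{k,0}=\bC_{k}\bigl(\bC_{k}+(\sigma_{b}^{2}/p_{\rp})\Id_{M}\bigr)^{-1}\tilde{\by}_{,k,0}^{\rp}$. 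Since $\bC_{k}$ commutes with $\bigl(\bC_{k}+(\sigma_{b}^{2}/p_{\rp})\Id_{M}\bigr)^{-1}$, the elementary identity $\bA\bigl(\bA+c\,\Id_{M}\bigr)^{-1}=\bigl(\Id_{M}+c\,\bA^{-1}\bigr)^{-1}$, valid for invertible $\bA$, applied with $\bA=\bC_{k}$ and $c=\sigma_{b}^{2}/p_{\rp}$ rewrites this exactly as the right-hand side of \eqref{estimatedChannel} \emph{provided} $\bC_{k}=\bR_{k}$; one may equivalently reach the same expression through the information form $\hatvg_{k,0}=\bigl(\bC_{k}^{-1}+(p_{\rp}/\sigma_{b}^{2})\Id_{M}\bigr)^{-1}(p_{\rp}/\sigma_{b}^{2})\tilde{\by}_{,k,0}^{\rp}$.

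The step I expect to need the most care is the identification $\bC_{k}=\EE[\bg_{k,0}\bg_{k,0}^{\H}]=\bR_{k}$. Writing $\bg_{k,0}=\bTheta_{k,0}\bh_{k,0}$ with $\bTheta_{k,0}$ a diagonal unit-modulus matrix independent of $\bh_{k,0}$ and conditioning on $\bTheta_{k,0}$, together with $\EE[\bh_{k,0}\bh_{k,0}^{\H}]=\bR_{k}$, gives $\bC_{k}=\EE\bigl[\bTheta_{k,0}\bR_{k}\bTheta_{k,0}^{\H}\bigr]$, whose diagonal is already that of $\bR_{k}$; I would then invoke the modeling convention under which the phase-noise processes are referenced at the estimation instant $n=0$ (equivalently, the negligible-innovation assumption of the footnote, which also renders the residual phase rotation immaterial in the estimator's design), so that $\bC_{k}=\bR_{k}$ and the derivation closes. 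Everything else — the pilot de-spreading, the covariance bookkeeping, and the matrix identity — is routine, so in the write-up I would compress the argument around these three points and, as a sanity check, verify the same estimator from the orthogonality condition $\EE\bigl[(\bg_{k,0}-\hatvg_{k,0})(\tilde{\by}_{,k,0}^{\rp})^{\H}\bigr]=\b0$.
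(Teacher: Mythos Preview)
Your proof is correct and follows essentially the same route as the paper: set up the orthogonal-pilot training model, de-spread to obtain $\tilde{\by}_{,k,0}^{\rp}=\bg_{k,0}+\tilde{\bn}_k$ with noise covariance $(\sigma_b^{2}/p_{\rp})\Id_M$, and then apply the standard LMMSE formula. The only cosmetic difference is the pilot normalization (the paper takes $\bPsi\bPsi^{\H}=\bI_K$ and places $\sqrt{p_{\rp}}$ directly in the received signal), and your explicit discussion of why $\EE[\bg_{k,0}\bg_{k,0}^{\H}]=\bR_k$ is in fact more careful than the paper, which simply invokes ``the MMSE estimation method'' without addressing that identification.
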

\begin{proof}
During the training phase $T$, all users  transmit  simultaneous  mutually orthogonal pilot sequences consisting of length $\tau$ symbols. Also, we assume that the effective channel remains constant during this phase.  In particular,  the pilot sequences can be represented by  $\bPsi = [\bpsi_{1}; \cdots; \bpsi_{K}] \in \bbC^{K \times \tau}$ with $\bPsi$ normalized, i.e., $\bPsi\bPsi^\H = \bI_K$. Given that the channel estimation takes place at time $0$, the received signal at the BS is written as
\begin{align}\label{eq:Ypt}
\bY_{\rp,0}= & \sqrt{p_{\rp}}\bTheta_{k,0} \bH_{0}\bPsi + \bZ_{\rp,0},
\end{align}
where $p_{\rp}$ is the common average transmit power for all users, and $\bZ_{\rp,0} \in \bbC^{M \times \tau}$ is spatially white additive Gaussian noise matrix at BS   during this phase. By correlating the received signal with the training sequence $\frac{1}{\sqrt{p_{\rp}}}\bpsi^\H_k$ of user $k$, the BS obtains
\begin{align}
\tilde{\by}_{k,0}^{\rp}
=  \bTheta_{k,0} {\bh_{k,0}}  + \frac{1}{\sqrt{p_{\rp}}} \tilde{\bz}_{\rp,0},\label{eq:Ypt3}
\end{align}
where $\tilde{\bz}_{\rp,0}= \bZ_{\rp,0}\bpsi^\H_{k}\sim \cC\cN(\b0,\sigma_{b}^{2}\bI_{M})$. Since we seek to find the joint channel-phase noise estimate, we denote the effective channel at time $0$ to be $\bg_{k,0}=\bTheta_{k,0} {\bh_{k,0}}$. After applying the MMSE estimation method~\cite{Kay}, the  effective channel estimated at the BS  can be written as in~\eqref{estimatedChannel}.

Employing the orthogonality principle, the current channel decomposes as
\begin{align}
\bg_{k,0} = \hat{\bg}_{k,0} + \tilde{\bg}_{k,0},\label{eq:MMSEorthogonality}
\end{align}
where $\hat{\bg}_{k,0}$ is distributed as $\cC\cN\left(\b0,\bD_{k}\right)$ with $\bD_{k}=\left(  \Id_{M}+\frac{\sigma_{b}^{2}}{p_{\rp}}\bR_{k}^{-1}\right)^{-1}\bR_{k}$, and $\tilde{\bg}_{k,0} \sim \cC\cN(\b0, \bR_{k} - \bD_{k})$ is the channel estimation error vector. Note that $\hat{\bg}_{k}$ and $\tilde{\bg}_{k}$ are statistically independent because they are uncorrelated and jointly Gaussian.
\end{proof}

\subsection{Channel Aging}
The relative movement between the users and the BS results to a Doppler shift, which makes the channel vary between when it is learnt by estimation, and when the estimate is applied for detection/precoding. The higher the velocity, the greater the performance loss because of worse channel estimation. The description of channel aging can be made by  an autoregressive model of order $1$, which relates the  joint channel-phase noise process $\bg_{k,n}$ between the BS and the $k$th user at time $n$ (current time) to the estimated channel during the training phase. Specifically, we  have
\begin{align}
\bg_{k,n}  = \bA_{n} \bg_{k,0} + \bee_{k,n},\label{eq:GaussMarkoModel}
\end{align}
where, initially, $\bA_{n}=\mathrm{J}_{0}(2 \pi f_{D}T_{s}n) \Id_{M}$ is supposed to model 2-D isotropic scattering~\cite{Jakes}\footnote{Normally, the BS antennas are deployed in a fixed space instead of considering  distantly distributed small BSs constituting one large MIMO BS. In such case $\bA_{n}$ is a scalar, since it is reasonably assumed that all antennas appear the same relative movement comparing to the user.}, $\bg_{k,0}$ is the channel obtained during the training phase, and $\bee_{k,n} \in \bbC^{M}$ is the uncorrelated channel error vector due to the channel variation modelled as a stationary Gaussian random process with i.i.d.~entries and distribution $\cC\cN(\b0,\bR_{k}-\bA_{n}\bR_{k}\bA_{n})$. 

Since~\eqref{eq:MMSEorthogonality} allows to expresss the  channel at time $0$  by means of its estimate,  the effective channel at time $n$ can be  written as
\begin{align}
\bg_{k,n} 
= & \bA_{n} \bg_{k,0} + \bee_{k,n}\nn\\
= & \bA_{n}\hat{\bg}_{k,0} + \tilde{\bee}_{k,n},\label{eq:GaussMarkov2}\end{align} 
where $\tilde{\bee}_{k,n}=\bA_{n} \tilde{\bg}_{k,n} +  \bee_{k,n}\sim \cC\cN(\b0, \bR_{k} - \bA_{n}\bD_{k}\bA_{n})$ and $\hat{\bg}_{k,0}$ are mutually independent. As far as $\bA_{n}$ is concerned, it is assumed known by the BS. Interestingly, it incorporates  the effects of 2-D isotropic scattering, user mobility,  and phase noise, as shown by Theorem~\ref{LMMSE}. Consequently, useful outcomes can be extracted  during the  analysis in Section~\ref{Deterministic}.  
 
\begin{Theorem}\label{LMMSE}
 The estimated effective joint channel-phase noise channel vector at time $n$  is expressed as
 \begin{align}
  {\bg}_{k,0}^{\mathrm{Joint}}=\bA_{n} {\bg}_{k,0}
 \end{align}
with 
\begin{align}
\bA_{n}=\mathrm{J}_{0}(2 \pi f_{D}T_{s}n)e^{-\frac{\sigma_{\varphi_{k}}^{2}}{2}n}\Delta\bPhi_{n}, \label{agingParameter}
\end{align}
where $\Delta\bPhi_{n}=\mathrm{diag}\left\{e^{-\frac{\sigma_{\phi_{1}}^{2}}{2}n} ,\ldots,e^{-\frac{\sigma_{\phi_{M}}^{2}}{2}n}\right\}$.
 \end{Theorem}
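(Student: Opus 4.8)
The plan is to build the effective channel $\bg_{k,n}=\bTheta_{k,n}\bh_{k,n}$ from its two statistically independent time-varying ingredients --- the physical propagation channel, which ages through the classical first-order Doppler model, and the oscillator phases, which perform a random walk --- and then to identify $\bA_{n}$ as the (L)MMSE predictor of $\bg_{k,n}$ given the channel knowledge at time $0$. First I would record how each ingredient evolves. Absent phase noise, the physical channel obeys the standard first-order Doppler aging recursion $\bh_{k,n}=\mathrm{J}_{0}(2\pi f_{D}T_{s}n)\,\bh_{k,0}+\bee'_{k,n}$ (the $\bA_{n}=\mathrm{J}_{0}(2\pi f_{D}T_{s}n)\Id_{M}$ instance mentioned after~\eqref{eq:GaussMarkoModel}), where the propagation innovation $\bee'_{k,n}$ is zero mean and independent of $\bh_{k,0}$ and of all phase-noise variables. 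For the phases, iterating the Wiener recursions~\eqref{phaseNoiseBS} from the estimation instant yields $\theta_{k,n}^{(i)}=\theta_{k,0}^{(i)}+\xi_{i,n}$ for each antenna $i$, where $\xi_{i,n}\defeq\sum_{l=1}^{n}(\delta^{\phi_{i}}_{l}+\delta^{\varphi_{k}}_{l})$ is zero-mean Gaussian with variance $n(\sigma_{\phi_{i}}^{2}+\sigma_{\varphi_{k}}^{2})$ and independent of every quantity carrying the time index $0$; equivalently $\bTheta_{k,n}=\bTheta_{k,0}\,\boldsymbol{\Xi}_{n}$, with $\boldsymbol{\Xi}_{n}\defeq\mathrm{diag}\big\{e^{j\xi_{1,n}},\ldots,e^{j\xi_{M,n}}\big\}$ diagonal and independent of $(\bTheta_{k,0},\bh_{k,0})$.

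Next I would compute the predictor. Writing $\bg_{k,n}=\bTheta_{k,0}\boldsymbol{\Xi}_{n}\bh_{k,n}$ and using that $\boldsymbol{\Xi}_{n}$ commutes with the diagonal matrix $\bTheta_{k,0}$, that $\boldsymbol{\Xi}_{n}$ and $\bee'_{k,n}$ are independent of $(\bTheta_{k,0},\bh_{k,0})$, and that $\bee'_{k,n}$ has zero mean, the conditioning collapses to
\begin{align}
\bg_{k,0}^{\mathrm{Joint}}=\EE[\bg_{k,n}\cond\bg_{k,0}]=\mathrm{J}_{0}(2\pi f_{D}T_{s}n)\,\EE[\boldsymbol{\Xi}_{n}]\,\bg_{k,0},
\end{align}
and the same expression follows from the linear predictor $\bR_{\bg_{k,n}\bg_{k,0}}\bR_{\bg_{k,0}}^{-1}\bg_{k,0}$ once one verifies $\bR_{\bg_{k,n}\bg_{k,0}}=\mathrm{J}_{0}(2\pi f_{D}T_{s}n)\,\EE[\boldsymbol{\Xi}_{n}]\,\bR_{\bg_{k,0}}$. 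The surviving expectation is evaluated by the Gaussian characteristic function, $\EE[e^{j\xi_{i,n}}]=e^{-n(\sigma_{\phi_{i}}^{2}+\sigma_{\varphi_{k}}^{2})/2}$, so that factoring out the antenna-independent term $e^{-\sigma_{\varphi_{k}}^{2}n/2}$ gives $\EE[\boldsymbol{\Xi}_{n}]=e^{-\sigma_{\varphi_{k}}^{2}n/2}\,\Delta\bPhi_{n}$, whence $\bg_{k,0}^{\mathrm{Joint}}=\bA_{n}\bg_{k,0}$ with $\bA_{n}$ exactly as in~\eqref{agingParameter}. Two sanity checks close the argument: setting all $\sigma_{\phi_{i}}^{2}=\sigma_{\varphi_{k}}^{2}=0$ recovers $\bA_{n}=\mathrm{J}_{0}(2\pi f_{D}T_{s}n)\Id_{M}$, the pure channel-aging case; and with identical oscillators $\Delta\bPhi_{n}$ degenerates to a scalar, matching the footnote.

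The part I expect to require care is not the algebra but making the independence structure explicit. One must state precisely that the Wiener increments $\delta^{\phi_{i}}_{l},\delta^{\varphi_{k}}_{l}$ for $l\ge1$, accrued after the estimation instant, are independent of $\bh_{k,0}$, of the innovation $\bee'_{k,n}$, and of the time-$0$ phases $\theta_{k,0}^{(i)}$; this is exactly what lets the conditional expectation of $\boldsymbol{\Xi}_{n}(\cdot)$ split into $\EE[\boldsymbol{\Xi}_{n}]$ times the conditional expectation of $(\cdot)$, even though $\bTheta_{k,0}$ is never observed separately from $\bg_{k,0}$. A secondary point to settle is the meaning of ``the channel knowledge at time $0$'': if it is the LMMSE estimate $\hat{\bg}_{k,0}$ of Proposition~1 rather than the true $\bg_{k,0}$, the predictor follows by inserting the orthogonal decomposition~\eqref{eq:MMSEorthogonality}, which is precisely the step that produces~\eqref{eq:GaussMarkov2}. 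Finally, it is worth stating explicitly that, unlike in the isotropic-scattering model, $\bA_{n}$ here is a genuine diagonal matrix whenever the per-antenna oscillators are distinct.
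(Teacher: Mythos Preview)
Your argument is correct and arrives at the same $\bA_{n}$, but by a different route than the paper. The paper parameterises the linear predictor as $\bA_{n}\bg_{k,0}$, writes out the error covariance
\[
\bE_{k,n}=\bR_{k}+\bA_{n}\bR_{k}\bA_{n}^{\H}-2\,\mathrm{J}_{0}(2\pi f_{D}T_{s}n)\,e^{-\frac{\sigma_{\varphi_{k}}^{2}}{2}n}\,\bA_{n}\bR_{k}\,\Delta\bPhi_{n},
\]
and obtains $\bA_{n}$ by differentiating $\tr\bE_{k,n}$ with respect to $\bA_{n}$ and setting the result to zero. You instead compute the predictor directly, either as the conditional mean $\EE[\bg_{k,n}\mid\bg_{k,0}]$ (which you show is linear in $\bg_{k,0}$, hence coincides with the LMMSE) or, equivalently, via the Wiener-filter identity $\bR_{\bg_{k,n}\bg_{k,0}}\bR_{\bg_{k,0}}^{-1}\bg_{k,0}$, after factoring $\bTheta_{k,n}=\bTheta_{k,0}\boldsymbol{\Xi}_{n}$ and invoking the Gaussian characteristic function for the accumulated phase increments. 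Your route is more constructive: it makes transparent that the Bessel factor originates from the Doppler correlation $\EE[\bh_{k,0}\bh_{k,n}^{\H}]=\mathrm{J}_{0}(2\pi f_{D}T_{s}n)\bR_{k}$ and each exponential from $\EE[e^{j\xi_{i,n}}]$, and it avoids any matrix differentiation. The paper's optimisation viewpoint has the complementary merit of certifying optimality over all linear maps without ever needing the (non-obvious, since the model is not jointly Gaussian) observation that the conditional mean happens to be linear. The independence bookkeeping you flag as the delicate step---that the Wiener increments for $l\ge 1$ are independent of $(\bTheta_{k,0},\bh_{k,0},\bee'_{k,n})$---is precisely what the paper also uses to evaluate the cross term $\EE[\bg_{k,0}\bg_{k,n}^{\H}]$, but leaves implicit.
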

\begin{proof}
The total mean square error (MSE), which is actually  given by means of the error covariance matrix ${\bE}_{k,n}$, is $\mathrm{MSE}=\tr {\bE}_{k,n}$. Specifically, ${\bE}_{k,n}$ can be written as
\begin{align}
{\bE}_{k,n}&=\mathbb{E}\left[ \left( \bg_{k,n}-\bA_{n} {\bg}_{k,0} \right) \left( \bg_{k,n}-\bA_{n} {\bg}_{k,0} \right)^{\H}\right]\nn\\
&=\mathbb{E}| \bg_{k,n}|^{2}+\bA_{n} \mathbb{E}|{\bg}_{k,0}|^{2}\bA_{n}^{\H}-2\mathbb{E}\left[ \mathrm{Re}\left\{\bA_{n}  {\bg}_{k,0}\bg_{k,n}^{\H} \right\} \right]\nn\\
&=\bR_{k}+\bA_{n}\bR_{k}\bA_{n}^{\H}-2\mathbb{E}\left[ \mathrm{Re}\left\{\bA_{n}  {\bg}_{k,0}\bg_{k,n}^{\H} \right\} \right]\nn\\
&=\bR_{k}+\bA_{n}\bR_{k}\bA_{n}^{\H}-2 {\mathrm{J}_{0}}(2 \pi f_{D}T_{s}n)e^{-\frac{\sigma_{\varphi_{k}}^{2}}{2}n}\bA_{n} \bR_{k}\Delta\bPhi_{n},\label{error_Cov_mat}
\end{align}
where  we have used that $\bg_{k,n}=\bTheta_{k,n}\bh_{k,n}$ with $\bTheta_{k,n}$ and $\bh_{k,n}$ being uncorrelated. In addition, we have denoted $\Delta\bPhi_{n}=\mathrm{diag}\left\{e^{-\frac{\sigma_{\phi_{1}}^{2}}{2}n} ,\ldots,e^{-\frac{\sigma_{\phi_{M}}^{2}}{2}n}\right\}$, and $\EE[{\bh}_{k,0}{\bh}_{k,n}^{\H}] = \mathrm{J}_{0}(2 \pi f_{D}T_{s}n)\bR_{k}$.
 $\bA_{n}$ is determined by minimizing the MSE in~\eqref{error_Cov_mat}. Thus, if we differentiate~\eqref{error_Cov_mat} with respect to $\bA_{n}$, and equate the resulting expression to zero,  we obtain the Hermitian  matrix $\bA_{n}$ as in~\eqref{agingParameter}.
\end{proof}

\begin{remark}
Notably, phase noise induces an extra loss in the coherence due to its accumulation over time between the actual channel and the estimated channel coefficients. In other words, phase noise is another contribution to the channel aging phenomenon that imposes a further challenge to investigate the realistic potentials of large MIMO.
\end{remark}

\begin{corollary}
 In case that  all BS antennas are connected with the same oscillator (CLO) or  all the oscillators are identical (ILO), $\bA_{n}$ degenerates to a scalar $\al_{n}= \mathrm{J}_{0}(2 \pi f_{D}T_{s}n)e^{-\frac{\sigma_{\varphi_{k}}^{2}+\sigma_{\phi_{k}}^{2}}{2}n}$ or a scaled identity matrix $\al_{n}\Id_{M}$, respectively. 
\end{corollary}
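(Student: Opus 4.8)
The plan is to deduce the corollary directly from Theorem~\ref{LMMSE} by specializing the diagonal matrix $\Delta\bPhi_n$ to the two oscillator topologies; no new optimization is needed. Recall from~\eqref{agingParameter} that $\bA_n=\mathrm{J}_0(2\pi f_D T_s n)\,e^{-\frac{\sigma_{\varphi_k}^2}{2}n}\,\Delta\bPhi_n$ with $\Delta\bPhi_n=\mathrm{diag}\{e^{-\sigma_{\phi_1}^2 n/2},\ldots,e^{-\sigma_{\phi_M}^2 n/2}\}$, so the entire dependence on the BS-side oscillator configuration is carried by the per-antenna increment variances $\sigma_{\phi_1}^2,\ldots,\sigma_{\phi_M}^2$.

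First I would treat the ILO case. Since the $M$ local oscillators are mutually independent but share a common law, $\sigma_{\phi_1}^2=\cdots=\sigma_{\phi_M}^2$; writing $\sigma_{\phi_k}^2$ for this common value makes every diagonal entry of $\Delta\bPhi_n$ equal to $e^{-\sigma_{\phi_k}^2 n/2}$, i.e. $\Delta\bPhi_n=e^{-\frac{\sigma_{\phi_k}^2}{2}n}\Id_M$. Substituting this into~\eqref{agingParameter} and merging the two scalar exponentials gives $\bA_n=\mathrm{J}_0(2\pi f_D T_s n)\,e^{-\frac{\sigma_{\varphi_k}^2+\sigma_{\phi_k}^2}{2}n}\,\Id_M=\alpha_n\Id_M$, which is the claimed scaled identity.

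Next I would handle the CLO case. Here a single oscillator drives all BS antennas, so in the setup of Theorem~\ref{LMMSE} the phase-noise matrix is $\bTheta_{k,n}=e^{j(\phi_n+\varphi_{k,n})}\Id_M$ with a single Wiener increment of variance $\sigma_{\phi_k}^2$. Repeating the MSE minimization of~\eqref{error_Cov_mat} with this scalar phase factor --- equivalently, observing that the cross term still evaluates to $\mathrm{J}_0(2\pi f_D T_s n)\,e^{-\frac{\sigma_{\varphi_k}^2}{2}n}e^{-\frac{\sigma_{\phi_k}^2}{2}n}\bR_k$ because $\EE[e^{j(\phi_n-\phi_0)}]=e^{-\sigma_{\phi_k}^2 n/2}$ for a zero-mean Gaussian random walk --- collapses the $M$-dimensional diagonal structure entirely, and the minimizer is simply the scalar $\alpha_n=\mathrm{J}_0(2\pi f_D T_s n)\,e^{-\frac{\sigma_{\varphi_k}^2+\sigma_{\phi_k}^2}{2}n}$ multiplying $\bg_{k,0}$.

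There is no genuine analytical obstacle; the only point requiring a little care is the bookkeeping --- checking that $\EE[e^{j(\phi_{m,n}-\phi_{m,0})}]=e^{-\sigma_{\phi_m}^2 n/2}$ (the standard moment-generating-function value for a Gaussian random walk with per-step variance $\sigma_{\phi_m}^2$, already used implicitly in the proof of Theorem~\ref{LMMSE}), and that under the ILO/CLO hypotheses this factor is antenna-independent so it can be pulled out of $\Delta\bPhi_n$. I would close by noting that the two cases agree numerically, the distinction being purely whether the common BS phase is modelled as a scalar gain or as a scaled identity on $\bbC^M$.
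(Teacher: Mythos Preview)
The paper states this corollary without proof, treating it as an immediate specialization of Theorem~\ref{LMMSE}; your argument is exactly the intended one, namely collapsing $\Delta\bPhi_n$ to $e^{-\sigma_{\phi_k}^2 n/2}\Id_M$ once the per-antenna increment variances coincide, and then distinguishing whether the common BS phase is modelled as a genuine scalar (CLO) or as a scaled identity acting on $\bbC^M$ (ILO). There is nothing to add or correct.
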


\subsection{Downlink Transmission}
In dowlink, the BS broadcasts data to its serving users with the same power $p_{\mathrm{d}}$. Exploiting that in TDD  the downlink channel is given by  the Hermitian  transpose of the uplink channel, the received signal $y_{k,n}\in\bbC$ by user $k$ during the data transmission phase ($n=\tau+1,\ldots, T_{\mathrm{c}}$) is
\begin{align}
\!\!y_{k,n}&=\sqrt{p_{\mathrm{d}}}\bh^\H_{k,n}\bTheta_{k,n}\bs_{n}+z_{k,n}\label{eq:DLreceivedSignal}
\end{align}
where  $\bs_{n}=\sqrt{\lambda}\bF_{n}\bx_{n}$ denotes the transmit signal vector by the  BS with $\lambda$, $\bF_{n} \in \bbC^{M \times K}$, and  $\bx_{n} = \big[x_{1,n},~x_{2,n},\cdots,~x_{K,n}\big]^\T \in \bbC^{K}\sim \cC\cN(\b0,\bI_{K})$ being a normalization parameter, the linear precoding matrix,  and the data symbol vector to its $K$ serving users, respectively. Also, $z_{k,n} \sim \cC\cN(0,\sigma_{k}^{2})$ is complex Gaussian noise at user $k$.  The normalization parameter is obtained by  constraining the transmit power per user to $p_{\mathrm{d}}$, i.e., $\mathbb{E}\left[\frac{p_{\mathrm{d}}}{K}\bs_{n}^{\H}\bs_{n}\right]=p_{\mathrm{d}}$, as
\begin{align}
\lambda=\frac{1}{\EE \left[\frac{1}{K} \tr\bF_{n}\bF^\H_{n}\right]}. \label{eq:lamda} 
\end{align}

Since $\bg_{k,n}=\bTheta_{k,n}\bh_{k,n}$,  the received signal is written as 
\begin{align}
&\!\!y_{k,n}\!=\!{\sqrt{\lambda p_{\mathrm{d}}}}\bg^\H_{k,n}\bTheta_{k,n}^{2} {\bff}_{k,n}x_{k,n}\nn\\
&\!\!+\sum_{i \neq k} {\sqrt{\lambda p_{\mathrm{d}}}}\bg^\H_{k,n}\bTheta_{k,n}^{2} { \bff}_{i,n}x_{i,n}+z_{k,n}\label{eq:DLreceivedSignal1}\\
&\!\!=\!\underbrace{{ \sqrt{\lambda p_{\mathrm{d}}}}\mathbb{E}\left[\bg^\H_{k,n} \bTheta_{k,n}^{2} {\bff}_{k,n}\right]\!x_{k,n}}_{\mathrm{desired~signal}}\!+\!\underbrace{z_{k,n}}_{\mathrm{noise}}\nn\\
&\!\!+\!\sum_{i \neq k} \!{\sqrt{\lambda p_{\mathrm{d}}}}\bg^\H_{k,n} \bTheta_{k,n}^{2} {\bff}_{i,n}x_{i,n}\nn\\
&\!\!+\!{ \sqrt{\lambda p_{\mathrm{d}}}}\left(\bg^\H_{k,n}\bTheta_{k,n}^{2} {\bff}_{k,n}x_{k,n}-\mathbb{E}\left[\bg^\H_{k,n} \bTheta_{k,n}^{2} {\bff}_{k,n}\right]x_{k,n}\right).\label{eq:DLreceivedSignalgeneral}
\end{align}

We reasonably use the above technique because user $k$ does not have access on instantaneous CSI, but he can be aware of $\EE[\bg^\H_{k,n} \bff_{jm} ]$. Moreover, since the available CSI at time $n$ is $\bA_{n}\hat{\bg}_{k,0}$,   as can be seen from~\eqref{eq:GaussMarkov2}, the MRT precoder is $ \bF_{n}=  \bA_{n}{\hat{\bG}}_{0}$.

By taking into account that the input symbols are Gaussian, and by considering the worst case uncorrelated additive noise with zero mean as in~\cite{Hassibi}, we consider a SISO model, scaled by $1/M^{2}$, with desired signal power
\begin{align}
S_{k,n}
&= \frac{\lambda}{M^{2}}  \Big|\EE\left[{\bg}^\H_{k,n}\widetilde{\bTheta}_{k,n}   \bA_{n} \hat{\bg}_{k,0} \right]\!\!\Big|^{2}.\label{normalizedMRT}
\end{align}
and interference plus noise power at user $k$ given by 
\begin{align}
I_{k,n}
&=   \frac{\lambda}{M^{2}}\var\left[\bg^\H_{k,n}\widetilde{\bTheta}_{k,n} \bff_{k,n}\right]  +   \frac{\sigma_{k}^{2}}{p_{\mathrm{d}}M^{2}} \nn\\
&+    \sum_{i\ne k}    \frac{\lambda}{M^{2}}\EE\bigg[\Big|\bg^\H_{k,n}\widetilde{\bTheta}_{k,n} \bff_{i,n}\Big|^{2}\bigg], \label{eq:DLgenIntfPower}
\end{align}
where we have denoted  $\widetilde{\bTheta}_{k,n}=\Delta{\bTheta}_{k,n}$. Note that we have assumed that $\bg_{k,n}=\bTheta_{k,0}^{}\bg_{k,n}$, since $\bg_{k,n}$ is circularly symmetric.

The mutual information between the detected symbol and the transmitted symbol is lower bounded by the achievable rate per user given by
\begin{align}
R_{k}=\frac{1}{T_{c}}\sum_{n=1}^{T_{c}-\tau}\mathrm{log}_{2}\left(1+ \gamma_{k,n} \right),
\label{eq:DLrate}
\end{align}
where $\gamma_{k,n}=\frac{S_{k,n} }{I_{k,n}}$ is the instantaneous downlink SINR at time $n$, and the expectation operates over all channel realizations.

\section{Performance Analysis}\label{Deterministic}
The investigation of the downlink SINR with  MRT  precoder accounting for the effects of imperfect and delayed CSIT due to user mobility and phase noise is the subject of this section. Specifically, we derive the deterministic equivalents of the SINR, being tight approximations  even for moderate system dimensions.  Simulations in Section~\ref{Simulations} corroborate our results and their tightness. 

The deterministic equivalent of the SINR $\gamma_{k,n}$ is such that $\gamma_{k,n}-\bar{\gamma}_{k,n}\xrightarrow[M \rightarrow \infty]{\mbox{a.s.}}0$\footnote{Note that $\xrightarrow[ M \rightarrow \infty]{\mbox{a.s.}}$ denotes almost sure convergence, and  $a_n\asymp b_n$ expresses the equivalence relation $a_n - b_n  \xrightarrow[ M \rightarrow \infty]{\mbox{a.s.}}  0$ with $a_n$  and $b_n$  being two infinite sequences.}, while the deterministic rate of user $k$ is obtained by the dominated convergence and the continuous mapping theorem~\cite{Vaart} as 
\begin{align}
R_{k}-\frac{1}{T_{c}}\sum_{n=1}^{T_{c}-\tau}\log_{2}(1 + \bar{\gamma}_{k,n}) \xrightarrow[ N \rightarrow \infty]{\mbox{a.s.}}0.\label{DeterministicSumrate}
\end{align}

The deterministic equivalent downlink achievable rate of $k$th user with MRT follows.
\begin{Theorem}\label{theorem:DLagedCSIMRT}
The deterministic equivalent of the downlink SINR with MRT precoding, accounting for imperfect CSI due to general impairments, and delayed CSI due to phase noise and user mobility is given by 
\begin{align}
\bar{\gamma}_{k,n} = \frac{  e^{-2\left( \sigma_{\varphi_{k}}^{2}+\sigma_{\phi}^{2} \right)n} {\delta}_{k}^{2}}{{\frac{1}{M}{\delta}_{k}^{'}}+  \frac{\sigma_{k}^{2}}{p_{\mathrm{d}}  \bar{\lambda}M}
 + \sum_{i\neq k}\frac{1}{M}\delta_{i}^{''}},\label{eq:DLdelayedCSIetaMRT1}
 \end{align}
 with 
  \begin{align}
  \bar{\lambda}&=\left( \frac1K\sum_{i=1}^{K}\frac{1}{M}\tr \bA_{n}^{2}\bD_{k} \right)^{-1},\nn
  \end{align}
${\delta}_{k}=\frac{1}{M}\tr\bA_{n}^{2}\bD_{k}$, ${\delta}_{k}^{'}=\frac{1}{M}\tr\bA_{n}^{2} \bD_{k} \left( \bR_{k} - \bA_{n}^{2}\bD_{k} \right)$, and $\delta_{i}^{''}=\frac{1}{M} \tr\bA_{n}^{2}\bD_{i} \bR_{k}$.
\end{Theorem}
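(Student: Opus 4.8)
The plan is to evaluate the three ingredients of $\gamma_{k,n}=S_{k,n}/I_{k,n}$ in~\eqref{normalizedMRT}--\eqref{eq:DLgenIntfPower} in closed form and then to pass to the large-system limit. Throughout I would use the aging decomposition $\bg_{k,n}=\bA_{n}\hat{\bg}_{k,0}+\tilde{\bee}_{k,n}$ of~\eqref{eq:GaussMarkov2}, the MRT rule $\bff_{i,n}=\bA_{n}\hat{\bg}_{i,0}$, the mutual independence of the families $\{\hat{\bg}_{i,0}\}_{i}$, $\tilde{\bee}_{k,n}$, and the phase-noise increment $\widetilde{\bTheta}_{k,n}$ — the last because, by~\eqref{phaseNoiseBS}, $\widetilde{\bTheta}_{k,n}$ is assembled from Wiener increments occurring after the estimation instant — the Gaussianity $\hat{\bg}_{i,0}\sim\cC\cN(\b0,\bD_{i})$, and the circular-symmetry reduction noted after~\eqref{eq:DLgenIntfPower}, which discards the time-$0$ phase so that only $\widetilde{\bTheta}_{k,n}$ persists and may be handled through its moments.

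First I would fix the normalization: since $\bF_{n}=\bA_{n}\hat{\bG}_{0}$ and $\bA_{n}$ is diagonal (Theorem~\ref{LMMSE}), one gets $\EE[\tfrac1K\tr\bF_{n}\bF_{n}^{\H}]=\tfrac1K\sum_{i}\tr\bA_{n}^{2}\bD_{i}$, hence $\lambda=M^{-1}\bar{\lambda}$. For the desired signal, inserting the decomposition into $\EE[\bg_{k,n}^{\H}\widetilde{\bTheta}_{k,n}\bA_{n}\hat{\bg}_{k,0}]$ kills the $\tilde{\bee}_{k,n}$ contribution (zero mean, independent of what it multiplies) and leaves $\tr(\bA_{n}\EE[\widetilde{\bTheta}_{k,n}]\bA_{n}\bD_{k})$; evaluating $\EE[\widetilde{\bTheta}_{k,n}]$ from the Wiener model — the phase accumulated over $n$ slots is zero-mean Gaussian with variance $n(\sigma_{\phi}^{2}+\sigma_{\varphi_{k}}^{2})$ per antenna, so its characteristic function is a pure exponential — produces, together with $\bA_{n}$ from~\eqref{agingParameter}, the exponential factor in the numerator of~\eqref{eq:DLdelayedCSIetaMRT1} and reduces $S_{k,n}$ to $\tfrac{\bar{\lambda}}{M}e^{-2(\sigma_{\varphi_{k}}^{2}+\sigma_{\phi}^{2})n}\delta_{k}^{2}$ with $\delta_{k}=\tfrac1M\tr\bA_{n}^{2}\bD_{k}$.

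For the interference-plus-noise power I would treat three pieces. For $i\ne k$, averaging over $\hat{\bg}_{i,0}$, $\widetilde{\bTheta}_{k,n}$ and $\bg_{k,n}$ in turn reduces $\EE[|\bg_{k,n}^{\H}\widetilde{\bTheta}_{k,n}\bff_{i,n}|^{2}]$ to $\tr\!\big(\bA_{n}\EE[\widetilde{\bTheta}_{k,n}\bD_{i}\widetilde{\bTheta}_{k,n}^{\H}]\bA_{n}\EE[\bg_{k,n}\bg_{k,n}^{\H}]\big)$, and using $\widetilde{\bTheta}_{k,n}\widetilde{\bTheta}_{k,n}^{\H}=\Id_{M}$ together with the circular-symmetry reduction of $\EE[\bg_{k,n}\bg_{k,n}^{\H}]$ to $\bR_{k}$ makes the $i$th summand of~\eqref{eq:DLgenIntfPower} equal $\tfrac{\bar{\lambda}}{M^{2}}\delta_{i}''$ with $\delta_{i}''=\tfrac1M\tr\bA_{n}^{2}\bD_{i}\bR_{k}$. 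For the self-interference variance I would write $\bg_{k,n}^{\H}\widetilde{\bTheta}_{k,n}\bff_{k,n}=\hat{\bg}_{k,0}^{\H}\bA_{n}\widetilde{\bTheta}_{k,n}\bA_{n}\hat{\bg}_{k,0}+\tilde{\bee}_{k,n}^{\H}\widetilde{\bTheta}_{k,n}\bA_{n}\hat{\bg}_{k,0}$, subtract the mean already obtained, and compute the second moment by Gaussian fourth-moment identities for the quadratic form in $\hat{\bg}_{k,0}$, by independence for the bilinear form, and by the Wiener structure of $\widetilde{\bTheta}_{k,n}$; the surviving terms combine into $\tfrac{\bar{\lambda}}{M^{2}}\delta_{k}'$ with $\delta_{k}'=\tfrac1M\tr\bA_{n}^{2}\bD_{k}(\bR_{k}-\bA_{n}^{2}\bD_{k})$, and the additive-noise term is $\tfrac{\sigma_{k}^{2}}{p_{\mathrm{d}}M^{2}}=\tfrac{\bar{\lambda}}{M}\cdot\tfrac{\sigma_{k}^{2}}{p_{\mathrm{d}}\bar{\lambda}M}$ once $\lambda$ is inserted.

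Forming the ratio $S_{k,n}/I_{k,n}$ cancels the common factor $\bar{\lambda}/M$ and gives~\eqref{eq:DLdelayedCSIetaMRT1}; to close the argument I would replace $\delta_{k},\delta_{k}',\delta_{i}''$ and $\bar{\lambda}$ by their deterministic normalized-trace limits and, using the Bai--Silverstein trace lemma together with the uniformly bounded spectral norm of $\bR_{k}$ and the unit-modulus entries of $\widetilde{\bTheta}_{k,n}$, discard the $O(1/M)$ residual cross terms and the fluctuation of $\widetilde{\bTheta}_{k,n}$ about its mean, so that $\gamma_{k,n}-\bar{\gamma}_{k,n}\xrightarrow[M\rightarrow\infty]{\mbox{a.s.}}0$. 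I expect the main obstacle to be the self-interference variance: it couples a fourth-order Gaussian moment in the correlated pair $(\hat{\bg}_{k,0},\bA_{n}\hat{\bg}_{k,0})$ with the random diagonal rotation $\widetilde{\bTheta}_{k,n}$, and it requires carefully separating the $\bTheta_{k,0}$ factors absorbed by circular symmetry from the increment factors that genuinely survive; a secondary but essential point is checking that the phase-noise fluctuations about $\EE[\widetilde{\bTheta}_{k,n}]$ vanish in the array limit, so that~\eqref{eq:DLdelayedCSIetaMRT1} is exact asymptotically.
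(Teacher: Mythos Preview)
Your proposal is correct and follows essentially the same route as the paper: both decompose $\bg_{k,n}$ via~\eqref{eq:GaussMarkov2}, insert the MRT precoder $\bff_{i,n}=\bA_{n}\hat{\bg}_{i,0}$, evaluate the normalization~$\lambda$, the desired-signal mean, the self-interference variance, and the cross-user interference separately, and then pass to the large-$M$ limit via trace lemmas. The only stylistic difference is that you compute each expectation exactly (Gaussian fourth-moment identities for the self-interference, explicit characteristic-function evaluation of $\EE[\widetilde{\bTheta}_{k,n}]$) before invoking Bai--Silverstein to discard residuals, whereas the paper appeals directly to \cite[Lem.~1]{Truong} at each step without writing out the finite-$M$ moments; your observation that $\lambda=\bar{\lambda}/M$ is in fact the correct reading of~\eqref{eq:lamda}, and the common factor indeed cancels in the ratio.
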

\begin{proof} The proof of Theorem~\ref{theorem:DLagedCSIMRT} is given in Appendix~\ref{theorem2}.\end{proof}

%

%
\begin{proposition}\label{powerLaw}
 For the identical SLOs or CLO setting, the downlink achievable SINR of user $k$ with MRT precoding at time $n$ subject to imperfect CSI, delayed CSI due to phase noise  and user mobility, and collocated  BS antennas  becomes
 \begin{align}
  \gamma_{k,n}=\frac{\tau E_{\mathrm{d}}E_{\mathrm{u}}}{\sigma^{4}M^{2 q-1}}\mathrm{J}_{0}^{2}(2 \pi f_{\mathrm{D}}T_{\mathrm{s}}n)e^{-3\left( {\sigma_{\varphi_{k}}^{2}+\sigma_{\phi}^{2}} \right)n}[\bR_{k}^{2}]_{mm},\label{scalingPower}
 \end{align}
where the transmit uplink and downlink powers are scaled proportionally to $1/M^{q}$, i.e.,  $p_{\mathrm{u}} = E_{\mathrm{u}}/\sqrt{M}$ and $p_{\mathrm{d}}= E_{\mathrm{d}}/M^{q}$
for fixed $E_{\mathrm{u}}$ and $E_{\mathrm{d}}$, and $q>0$.
\end{proposition}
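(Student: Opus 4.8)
The plan is to insert the CLO/ILO form of $\bA_{n}$ and the vanishing power law into the deterministic equivalent of Theorem~\ref{theorem:DLagedCSIMRT} and then retain only the single surviving (additive-noise) term. First, by the Corollary to Theorem~\ref{LMMSE}, for identical SLOs or a CLO with collocated antennas one has $\bA_{n}=\alpha_{n}\Id_{M}$ with $\alpha_{n}=\mathrm{J}_{0}(2\pi f_{\mathrm{D}}T_{\mathrm{s}}n)e^{-(\sigma_{\varphi_{k}}^{2}+\sigma_{\phi}^{2})n/2}$, so that $\bA_{n}^{2}=\alpha_{n}^{2}\Id_{M}$ and $e^{-2(\sigma_{\varphi_{k}}^{2}+\sigma_{\phi}^{2})n}\alpha_{n}^{2}=\mathrm{J}_{0}^{2}(2\pi f_{\mathrm{D}}T_{\mathrm{s}}n)e^{-3(\sigma_{\varphi_{k}}^{2}+\sigma_{\phi}^{2})n}$ --- this is where the exponent $3$ in \eqref{scalingPower} comes from. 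The quantities of Theorem~\ref{theorem:DLagedCSIMRT} then collapse to $\delta_{k}=\tfrac{\alpha_{n}^{2}}{M}\tr\bD_{k}$, $\delta_{k}'=\tfrac{\alpha_{n}^{2}}{M}\tr\bD_{k}(\bR_{k}-\alpha_{n}^{2}\bD_{k})$, $\delta_{i}''=\tfrac{\alpha_{n}^{2}}{M}\tr\bD_{i}\bR_{k}$ and $\bar{\lambda}=\big(\tfrac{\alpha_{n}^{2}}{K}\sum_{i}\tfrac1M\tr\bD_{i}\big)^{-1}$.

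Second, take both powers proportional to $M^{-q}$, i.e., $p_{\mathrm{u}}=E_{\mathrm{u}}/M^{q}$ (so $p_{\mathrm{p}}=\tau p_{\mathrm{u}}=\tau E_{\mathrm{u}}/M^{q}$) and $p_{\mathrm{d}}=E_{\mathrm{d}}/M^{q}$, both of which vanish as $M\to\infty$. Since the spectra of the $\bR_{k}$'s stay bounded away from $0$ and $\infty$ uniformly in $M$, the growth $\sigma_{b}^{2}/p_{\mathrm{p}}\to\infty$ linearizes the estimator covariance, $\bD_{k}=(\Id_{M}+\tfrac{\sigma_{b}^{2}}{p_{\mathrm{p}}}\bR_{k}^{-1})^{-1}\bR_{k}=\tfrac{p_{\mathrm{p}}}{\sigma_{b}^{2}}\bR_{k}^{2}+\mathcal{O}(p_{\mathrm{p}}^{2})$; consequently $\delta_{k}=\tfrac{\alpha_{n}^{2}p_{\mathrm{p}}}{M\sigma_{b}^{2}}\tr\bR_{k}^{2}+\mathcal{O}(p_{\mathrm{p}}^{2})$, $\tfrac1M\delta_{k}'=\mathcal{O}(p_{\mathrm{p}}/M)$, $\sum_{i\ne k}\tfrac1M\delta_{i}''=\mathcal{O}(Kp_{\mathrm{p}}/M)$, while $\bar{\lambda}$ is of exact order $1/p_{\mathrm{p}}$.

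Third, compare the three denominator terms of $\bar{\gamma}_{k,n}$: the first two, $\tfrac1M\delta_{k}'$ and $\sum_{i\ne k}\tfrac1M\delta_{i}''$, carry no factor $p_{\mathrm{d}}$, whereas the noise term $\tfrac{\sigma_{k}^{2}}{p_{\mathrm{d}}\bar{\lambda}M}$ is of order $p_{\mathrm{p}}/(p_{\mathrm{d}}M)$ and therefore dominates the former two as $p_{\mathrm{d}}\to0$ (with $K$ held fixed). Keeping only that term, $\bar{\gamma}_{k,n}\asymp\tfrac{1}{\sigma_{k}^{2}}e^{-2(\sigma_{\varphi_{k}}^{2}+\sigma_{\phi}^{2})n}\delta_{k}^{2}p_{\mathrm{d}}\bar{\lambda}M$; then $\bar{\lambda}=(\tfrac1K\sum_{i}\delta_{i})^{-1}$ and the homogeneity of the collocated array (all users sharing the same large-scale statistics, so that $\delta_{k}^{2}\bar{\lambda}\to\delta_{k}$ and $\tfrac1M\tr\bR_{k}^{2}\to[\bR_{k}^{2}]_{mm}$), together with $\delta_{k}=\alpha_{n}^{2}\tfrac1M\tr\bD_{k}$, the linearized $\bD_{k}$, the identity $e^{-2(\sigma_{\varphi_{k}}^{2}+\sigma_{\phi}^{2})n}\alpha_{n}^{2}=\mathrm{J}_{0}^{2}e^{-3(\sigma_{\varphi_{k}}^{2}+\sigma_{\phi}^{2})n}$, the product $p_{\mathrm{p}}p_{\mathrm{d}}M=\tau E_{\mathrm{u}}E_{\mathrm{d}}M^{1-2q}$ and $\sigma_{b}^{2}=\sigma_{k}^{2}=\sigma^{2}$, reproduce \eqref{scalingPower} exactly.

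The delicate step is the third one --- establishing that the noise term genuinely dominates the denominator. It requires (a) uniform control of the spectra of $\{\bR_{k}\}$ to bound the remainder of the $\bD_{k}$ linearization, (b) careful bookkeeping of the powers of $M$, $p_{\mathrm{p}}$ and $p_{\mathrm{d}}$ hidden in $\delta_{k}'$, $\delta_{i}''$ and, crucially, in $\bar{\lambda}\sim1/p_{\mathrm{p}}$, and (c) the implicit regime ``$K$ fixed, $M\to\infty$'', which alone makes $\sum_{i\ne k}\tfrac1M\delta_{i}''$ negligible against the noise term --- were $K$ to scale with $M$ the link would remain interference-limited. Everything else (the scalarization and the gathering of constants) is routine.
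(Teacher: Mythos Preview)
Your argument is correct and arrives at \eqref{scalingPower}, but the route differs from the paper's. You start from the deterministic equivalent of Theorem~\ref{theorem:DLagedCSIMRT}, scalarize $\bA_{n}$ via the CLO/ILO corollary, linearize $\bD_{k}$ in the small-$p_{\mathrm{p}}$ regime, and then carry out an order-of-magnitude comparison of the three denominator terms to isolate the surviving noise term. The paper, by contrast, does not invoke Theorem~\ref{theorem:DLagedCSIMRT} at all: it returns to the raw decomposition \eqref{eq:DLreceivedSignalgeneral}, applies the law of large numbers directly to $\tfrac{1}{M^{2}}|\hat{\bg}_{k,0}^{\H}\widetilde{\bTheta}_{k,n}\bA_{n}^{2}\hat{\bg}_{k,0}|^{2}$ and to the variance/cross-interference terms of \eqref{eq:DLgenIntfPower} (the latter two are declared to vanish as $M\to\infty$), computes $\lambda$ the same way, and only at the end substitutes $\bD_{k}\approx\tfrac{\tau E_{\mathrm{u}}}{M^{q}\sigma^{2}}\bR_{k}^{2}$. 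Your approach is more modular---it reuses an already proved result and makes the bookkeeping of $M$, $p_{\mathrm{p}}$, $p_{\mathrm{d}}$ explicit---whereas the paper's is self-contained but slightly repetitive of Appendix~\ref{theorem2}. One caveat worth flagging: Theorem~\ref{theorem:DLagedCSIMRT} is stated in the regime $K/M\to\beta>0$, while your dominance argument in step~(c) needs $K$ fixed; you do note this, and the trace-lemma computations behind Theorem~\ref{theorem:DLagedCSIMRT} remain valid for fixed $K$, but strictly speaking you are borrowing its \emph{formula} rather than its \emph{statement}. The paper's direct law-of-large-numbers proof sidesteps that technicality, at the cost of being less systematic.
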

\begin{proof}
Let us first substitute the MRT precoder  in~\eqref{eq:DLreceivedSignalgeneral}. Then, we divide both the desired and the interference parts by $\lambda$ and $1/M^{2}$. The desired signal power is written as
\begin{align}
S_{k,n}^{\mathrm{MRT}}
&=  \frac{1}{M^{2}}\Big|\bg^\H_{k,n} \widetilde{\bTheta}_{k,n} \bA_{n} {\hat{\bg}}_{k,0}\Big|^{2}\nn\\
&=  \frac{1}{M^{2}}\Big|\hat{\bg}^\H_{k,0}\widetilde{\bTheta}_{k,n}  \bA_{n}^{2} {\hat{\bg}}_{k,0}\Big|^{2}\label{desiredMid}\\
&=\frac{1}{M^{2}}\mathrm{J}_{0}^4{2}(2 \pi f_{\mathrm{D}}T_{\mathrm{s}}n)e^{-4 \left( {\sigma_{\varphi_{k}}^{2}+\sigma_{\phi}^{2}} \right)n}[\bD_{k}]_{mm}^{2},\label{desired}
\end{align}
where $[\bD_{k}]_{mm}$ is the $m$th diagonal element of the matrix  $\bD_{k}$ expressing the variance of the $m$th element. Moreover, in the last step of~\eqref{desired}, we have used the law of large numbers for large $M$ in the case of collocated  BS antennas, as well as that $\bA_{n}$ is a scaled identity matrix  or a scalar in the case of identical SLOs or CLO, respectively. In other words, $ {\hat{\bg}}_{k,0}$ has i.i.d. elements with variance $[\bD_{k}]_{mm}$. As far as the interference is concerned, the first and third terms of~\eqref{eq:DLgenIntfPower} vanish to zero as $M \to \infty$ by means of the same law. Thus, we have
\begin{align}
I_{k,n}^{\mathrm{MRT}}&=\frac{\sigma^{2}_{k}}{M^{2}p_{\mathrm{d}} \lambda}\nn\\
&=\frac{\sigma^{2}_{k}\mathrm{J}_{0}^{2}(2 \pi f_{\mathrm{D}}T_{\mathrm{s}}n)e^{-\left( {\sigma_{\varphi_{k}}^{2}+\sigma_{\phi}^{2}} \right)n}[\bD_{k}]_{mm}}{M p_{\mathrm{d}} }, 
\end{align}
where  we have applied the law of large numbers to obtain $ \lambda\!=\!\left(\frac{1}{M}\mathrm{J}_{0}^{2}(2 \pi f_{\mathrm{D}}T_{\mathrm{s}}n)e^{-\left( {\sigma_{\varphi_{k}}^{2}\!+\sigma_{\phi}^{2}} \right)n}[\bD_{k}]_{mm}\! \right)^{-1}\!$ from~\eqref{eq:lamda}. Finally, we  substitute $\bD_{k}=\frac{\tau E_{\mathrm{u}}}{M^{q}\sigma^{2}_{k}}\bR_{k}^{2}$, and  the result in~\eqref{scalingPower} is obtained. \end{proof}

\begin{remark}
Clearly,~\eqref{scalingPower} shows that the achievable rate per user depends on the selection of  $q$. Specifically, if $q\le 1/2$, $\gamma_{k,n}$ is unbounded.  On the contrary, the SINR of user $k$ reduces to zero, if $q>1/2$. This clearly indicates that the transmit power of each user has been reduced over the required value. Most importantly, when $q=1/2$, the SINR approaches a finite non-zero limit given by 
\begin{align}
   \gamma_{k,n}=\frac{\tau^{2} E_{\mathrm{d}} E_{\mathrm{u}}}{\sigma_{k}^{4}}\mathrm{J}_{0}^{2}(2 \pi f_{\mathrm{D}}T_{\mathrm{s}}n)e^{-3\left( {\sigma_{\varphi_{k}}^{2}+\sigma_{\phi}^{2}} \right)n}[\bR_{k}^{2}]_{mm}.
\end{align}
\end{remark}

As a result, both phase noise and user mobility reduce the effective SINR, but neither of them changes the power scaling law.

\section{Numerical Results}\label{Simulations}
This section presents some representative numerical examples illustrating the impact of channel aging due to the combined effects of  phase noise and  user mobility on the performance of very large MIMO systems with MRT.   The correctness of the proposed results is validated by Monte-Carlo simulations. 

The simulation setup, based on long-term evolution (LTE) system specifications,  considers a single cell with radius of $R = 1000$ meters, and it assumes a guard range of $r_{0} = 100$ meters specifying the distance between the nearest user and the BS. The  BS, including $M$ antennas, broadcasts to $K$ users being uniformly distributed within the cell. While the theoretical analysis considers general $\bR_{k}$, this section takes into account only for large-scale fading (lognormal shadowing and path-loss), i.e., $\bR_{k}=\beta_{k}\Id_{M}$ with $\beta_{k}=z_{k}/\left( r_{k}/r_{0} \right)^{\upsilon}$, where $z_k$ is a log-normal random variable with standard deviation $\sigma$ ($\sigma = 8$ dB) expressing the shadow fading effect, $r_{k}$ denotes the distance between user $k$ and the BS, and $\upsilon$ ($\upsilon = 3.8$) is the path loss exponent. The  training duration consists of   $\tau=K$ symbols, and the
phase noises  at the  BS and user LOs are simulated as discrete Wiener processes by~\eqref{phaseNoiseBS}, with
increment standard deviations in the interval $0\deg-2\deg$~\cite{Colavolpe}. The coherence time is $T_{c}=1/4 f_{\mathrm{D}}=1~\mathrm{ms}$, where $f_{\mathrm{D}}=250~ \mathrm{Hz}$ is the Doppler spread corresponding to a relative velocity of  $135$ km/h between the BS and the users, if the  center frequency is assumed to be $f_c= 2~\mathrm{GHz}$. Moreover, given that the bandwidth for LTE-A is $\mathrm{W}=20\mathrm{MHz}$, the symbol time is $T_s=1/(2\mathrm{W})=0.025~\mathrm{\mu s}$. 

One useful metric  is the sum spectral efficiency denoted by
\begin{align} \label{eq num 1}
 \mathcal{S} \triangleq\sum_{k=1}^K
 \bar{R}_{k}.
\end{align}

In the following figures, the ``solid'', ``dash'', and ``dot'' lines designate the analytical results with no phase noise, $\phi_{m,n}=0\deg,
 \varphi_{k,n}=2\deg$, and $\phi_{m,n}= \varphi_{k,n}=2\deg$, respectively.   The bullets represent the simulation results.

Fig~\ref{fig:1} depicts the achievable sum-rate by varying the number of BS antennas for different values of phase noise in a  static environment ($v=0~\mathrm{Km/h}$). By increasing the hardware imperfection in terms of phase noise, the sum-rate decreases. When the phase noise at the user side has a variance of $\sigma_{\varphi_{k}}^{2}=2\deg$ the rate loss is $56\%$ with the BS having $M=30$ antennas, but when $M$ reaches $300$, the loss is smaller, i.e.,  $47\%$. An extra phase noise of $0\deg$ in the case of of CLO/ILOs setup leads to further losses of $50\%$  and  $47\%$, if $M=30$ and $M=300$, respectively.  

\begin{figure}[t]
    \centering
    \centerline{\includegraphics[width=0.49\textwidth]{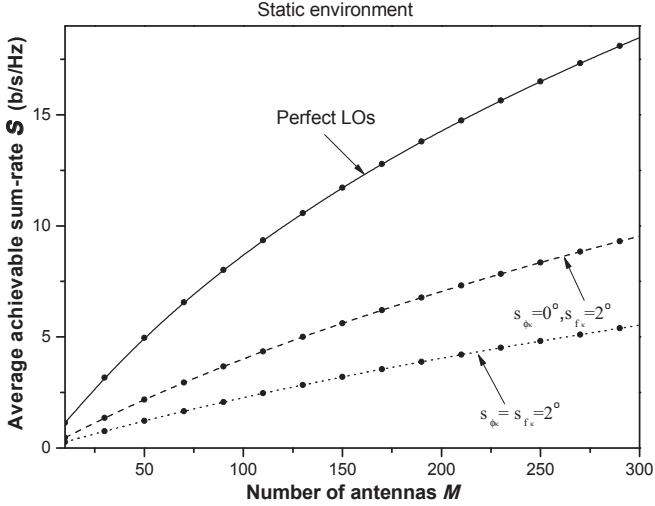}}
    \caption{Simulated and deterministic equivalent downlink sum-rates with MRT precoder in a static environment with respect to different number of BS antennas, and  various values of phase noise.}
    \label{fig:1}
\end{figure}

The variation of $\mathcal{S}$ with the normalized Doppler shift $f_{\mathrm{D}}T_{\mathrm{s}}$ is presented in Fig.~\ref{fig:2}, when $M=60$ for different values of phase noise starting from perfect LOs (no  phase noise) to high phase noise ($\sigma_{\phi_{k}}^{2}=\sigma_{\varphi_{k}}^{2}=2\deg$). The effect of Doppler shift dominates over the impact of phase noise. In fact, for low velocities in the order of $30~\mathrm{km/h}$ equivalent to $f_{\mathrm{D}}T_{\mathrm{s}}\approx 0.2$, the degradation due to phase noise starts to become  insignificant, but then the achievable rate can become so low that it is inadequate for practical applications.  In the same figure, the straight line, which is parallel to the horizontal axis, illustrates the sum-rate with imperfect CSI, but no channel aging (no user mobility or phase noise).  Furthermore,  as phase noise increases, the performance worsens. 

\begin{figure}[t]
    \centering
    \centerline{\includegraphics[width=0.49\textwidth]{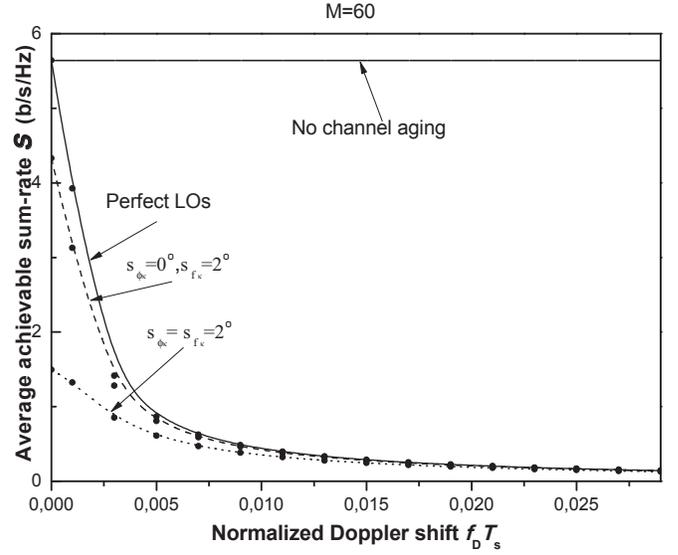}}
    \caption{Simulated and deterministic equivalent downlink sum-rates with MRT  precoder for $M=60$ with respect to the normalized Doppler shift,  and various values of phase noise.}
    \label{fig:2}
\end{figure}

Fig.~\ref{fig:3} illustrates how $p_{\mathrm{d}}$ varies with the number of BS antennas in a static environment ($f_{\mathrm{D}}T_{\mathrm{s}}=0$), in order to  achieve  $1$ bit/s/Hz. Specifically,  $p_{\mathrm{d}}$ decreases considerably when we increase $M$. Especially, a closer observation shows a reduction in the transmit power by approximately 1.5dB  after doubling the number of BS antennas, which agrees with previously known results e.g.,~\cite{Ngo1}. The more severe is the phase noise, the higher the downlink power is. 
\begin{figure}[t]
    \centering
    \centerline{\includegraphics[width=0.49\textwidth]{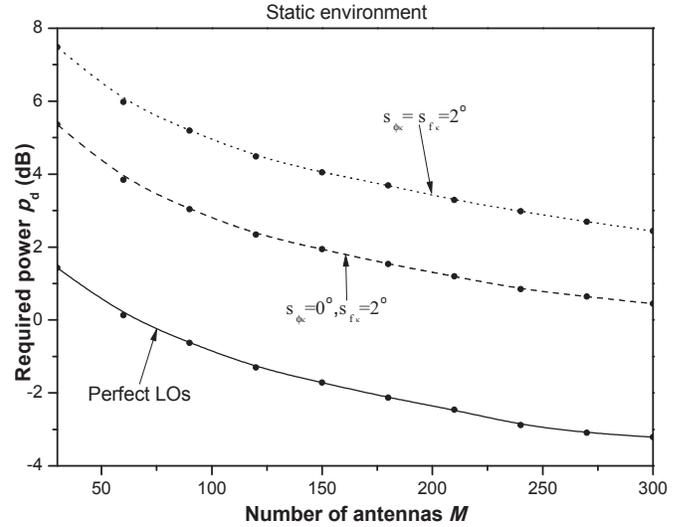}}
    \caption{Required transmit power  to achieve $1$ bit/s/Hz per user  with MRT  precoder in a static environment  with respect to different number of BS antennas,  and various values of phase noise.}
    \label{fig:3}
\end{figure}

\section{Conclusions}
Channel aging, being a limiting factor in massive MIMO systems, considered initially  only  user mobility. However, phase noise, which innovates with time, is more severe in large antenna arrays than in convetional systems, since the former include low quality antenna elements. As a result, its contribution to channel aging is substantial. Hence, we obtained an interesting joint channel-phase noise model, and we derived the deterministic equivalent of the downlink SINR. Notably, we showed that user mobility affects more system's performance than phase noise. Moreover, the phase noise plays a significant role on  the achievable rate in low mobility conditions. Interestingly, massive MIMO are still favorable under generalized channel aging.
\begin{appendices}

 \section{Proof of Theorem~\ref{theorem:DLagedCSIMRT}}\label{theorem2}
The derivation of the deterministic SINR starts by  deriving the deterministic equivalent of the normalization parameter $\lambda$. Specifically, the normalization parameter can be written by means of~\eqref{eq:lamda} as
\begin{align}
\lambda&=
\frac{K}{\EE\Big[\frac{1}{M}\tr \bA_{n}{\hat{\bG}}_{0}{\hat{\bG}}_{0}^{\H} \bA_{n}   \Big]} \nn\\
&= \frac{K}{\EE\Big[\sum_{i=1}^{K}\frac{1}{M}\bA_{n}^{2}  \hat{\bg}_{k,0} \hat{\bg}_{k,0}^{\H}  \Big]}\nn\\
&\asymp \left( \frac1K\sum_{i=1}^{K}\frac{1}{M}\tr \bA_{n}^{2}\bD_{k} \right)^{-1},\label{desired1MRT}
\end{align}
where we have applied~\cite[Lem. 1]{Truong}.
Similarly, regarding the rest part of the desired signal power, we obtain
\begin{align}
 \frac{1}{M}\EE\!\left[{\bg}^\H_{k,n}\widetilde{\bTheta}_{k,n}  \bA_{n} \hat{\bg}_{k,0} \right]
 &\!\!=\!\! \frac{1}{M}\EE\!\left[\!(\hat{\bg}^\H_{k,0}\bA_{n}  \!+ \!\tilde{\bee}^\H_{k,n} ) \widetilde{\bTheta}_{k,n}\bA_{n} \hat{\bg}_{k,0} \right]\nn\\
 &\!\!=\!\!\frac{1}{M}\EE\Big[{\hat{\bg}^\H_{k,0}\bA_{n}\widetilde{\bTheta}_{k,n} \bA_{n}\hat{\bg}_{k,0} }\Big],\nn\\
 &\!\!\asymp\!\!\frac{1}{M}\EE\left[\tr\bA_{n}\widetilde{\bTheta}_{k,n}\bA_{n}\bD_{k}\label{desired4MRT}\right]\\
 &\!\!\asymp \!\!\frac{e^{-\left( \sigma_{\varphi_{k}}^{2}+\sigma_{\phi}^{2} \right)n}}{M}\tr\bA_{n}^{2}\bD_{k}.\label{desired2MRT}
\end{align}

The deterministic equivalent signal power  $\bar{S}_{k,n} = \lim_{M \rightarrow \infty} S_{k,n}$ becomes by using~\eqref{desired2MRT} as
\begin{align}
\bar{S}_{k,n}\asymp  \bar{\lambda}e^{-2\left( \sigma_{\varphi_{k}}^{2}+\sigma_{\phi}^{2} \right)n}  {\delta}_{k}^{2},\label{eq:theorem4.5MRT}
\end{align}
where ${\delta}_{k}=\frac{1}{M}\tr\bA_{n}^{2}\bD_{k}$.

Now, we proceed with the derivation of each term  of~\eqref{eq:DLgenIntfPower}. Thus, the first term can be written as 
\begin{align}
&\frac{1}{M^{2}} \var\left[\bg^\H_{k,n} \widetilde{\bTheta}_{k,n} \bA_{n} \hatvg_{k,0} \right]\nn\\
&~~~~~~~~~~-\frac{1}{M^{2}} \EE\bigg[\Big|{\tilde{\bee}_{k,n}^{\H}\widetilde{\bTheta}_{k,n} \bA_{n}\hatvg
_{k,0} }\Big|^2\bigg]\xrightarrow[ M \rightarrow \infty]{\mbox{a.s.}} 0,
\end{align}
where the property of the variance operator $\mathrm{var}\left( x \right)=\EE[x^{2}]- \EE^{2}[x] $ together with~\eqref{eq:GaussMarkov2} have been used. Lemma~1 in~\cite{Truong} enables us to derive the deterministic equivalent as
\begin{align}
 \frac{\lambda}{M} \var\left[\bg^\H_{k,n} \widetilde{\bTheta}_{k,n}  \bA_{n} \hatvg_{k,0} \right] 
\asymp&  \frac{\bar{\lambda}}{M}{\delta}_{k}^{'}, \label{eq:theorem4.7MRT}
\end{align}
where ${\delta}_{k}^{'}=\frac{1}{M}\tr\bA_{n}^{2}\bD_{k} \left( \bR_{k} - \bA_{n}^{2}\bD_{k} \right)$.

Similarly, use of~\cite[Lem. 1]{Truong} in the last term completes the proof. Specifically, we have
\begin{align}
  \frac{\lambda}{M^{2}}\EE\Big[\left|\bg^\H_{k,n} \widetilde{\bTheta}_{k,n}\bA_{n} \hatvg_{i,0} \right|^{2}\Big] 
&\asymp \frac{\bar{\lambda}}{M^{2}} \tr\widetilde{\bTheta}_{k,n}\bA_{n}^{2}\bD_{i}\widetilde{\bTheta}_{k,n} ^{\H}\bR_{k},\nn\\
&=\frac{\bar{\lambda}}{M}\delta_{i}^{''},\label{lasttermMRT}
\end{align}
since $\bg_{k,n} $ and $\hatvg_{i,0} $ are mutually independent. Note here that $\delta_{i}^{''}=\frac{1}{M} \tr\bA_{n}^{2}\bD_{i} \bR_{k}$. 
\end{appendices}

\end{document}